\theoremstyle{thmstyleone}%
\newtheorem{theorem}{Theorem}
\newtheorem{lemma}[theorem]{Lemma}%
\newtheorem{corollary}[theorem]{Corollary}%
\theoremstyle{thmstyletwo}%
\theoremstyle{thmstylethree}%
\newtheorem{definition}{Definition}%
\newcommand{\seed}{\textsf{seed}}
\newcommand{\pk}{\textsf{pk}}
\DeclareMathOperator{\rank}{rank}
\newcommand{\transpose}{\intercal}
\begin{document}

    \title[Smaller public keys for MinRank-based schemes]{Smaller public keys for MinRank-based schemes%
        \footnote{The authors are members of GNSAGA of INdAM and of CrypTO, the group of Cryptography and Number Theory of the Politecnico di Torino.
            This work was partially supported by project SERICS (PE00000014) under the MUR National Recovery and Resilience Plan funded by the European Union -- NextGenerationEU.}}


    \author[1]{\fnm{Antonio J.} \sur{Di Scala}}\email{antonio.discala@polito.it}

    \author*[1]{\fnm{Carlo} \sur{Sanna}}\email{carlo.sanna@polito.it}
    \equalcont{These authors contributed equally to this work.}

    \affil*[1]{\orgdiv{Department of Mathematical Sciences}, \orgname{Politecnico di Torino}, \orgaddress{\street{Corso Duca degli Abruzzi 24}, \city{Torino}, \postcode{10129}, \country{Italy}}}


    \abstract{%
        MinRank is an NP-complete problem in linear algebra whose characteristics make it attractive to build post-quantum cryptographic primitives.
        Several MinRank-based digital signature schemes have been proposed.
        In particular, two of them, MIRA and MiRitH, have been submitted to the NIST Post-Quantum Cryptography Standardization Process.

        In this paper, we propose a key-generation algorithm for MinRank-based schemes that reduces the size of the public key to about 50\% of the size of the public key generated by the previous best (in terms of public-key size) algorithm.
        Precisely, the size of the public key generated by our algorithm sits in the range of 328--676 bits for security levels of 128--256 bits.
        We also prove that our algorithm is as secure as the previous ones.
    }

    \keywords{Digital signatures, key generation, MinRank problem, post-quantum cryptography, public key, zero-knowledge proof of knowledge}

    \pacs[MSC Classification]{15A03, 15A99, 11T71, 94A60}

    \maketitle

    \section{Introduction}

    \emph{MinRank} is a problem in linear algebra that was first introduced by Buss, Frandsen, and Shallit (1999)~\cite{MR1705082}.
    Roughly speaking, given $k + 1$ matrices $M_0, \dots, M_k$ of size $m \times n$ over a finite field $\mathbb{F}_q$, the decisional version of MinRank asks to determine if there exists a non-trivial linear combination of $M_0, \dots, M_k$ whose rank does not exceed a fixed parameter $r$.
    The search version of MinRank, which is the one we will be focusing on hereafter, asks to find such a linear combination.

    For several reasons, MinRank is an attractive candidate to build post-quantum cryptographic primitives.
    First, MinRank is completely based on simple linear algebra operations, which can be implemented easily and efficiently.
    Second, the hardness of MinRank is supported by a long line of research: MinRank is an NP-complete problem~\cite{MR1705082} and, due to its relevance in cryptanalysis~\cite{MR4284269,MR3455911,MR4354587}, algorithms for solving it have been extensively studied, to the extent that random instances of MinRank are expected to be hard~\cite{Bardet2022107,cryptoeprint:2022/1031,MR4210317,MR3070108,MR2490380,MR2920562,MR1729291,MR3989004}.
    Finally, there are no known quantum algorithms to solve MinRank that go beyond straightforward quantum search applications.

    Several digital signature schemes based on MinRank have been proposed, namely: a scheme due to Courtois (2001)~\cite{MR1934855}, \textsf{MR-DSS} (2022)~\cite{Bellini2022144}, \textsf{MIRA}~(2023)~\cite{MIRA_preprint} (see also~\cite{cryptoeprint:2022/1512}), and \textsf{MiRitH}~(2023)~\cite{MiRitH_specs} (see also~\cite{ARZV22}).
    In particular, \textsf{MIRA} and \textsf{MiRitH} have been submitted to the NIST Post-Quantum Cryptography Standardization Process.

    In all these schemes, the public key is a random instance of MinRank, the secret key is the solution of such an instance, and the signing and verification algorithms together are a non-interactive zero-knowledge proof of knowledge of the solution.
    While the secret key can be easily compressed as a seed of $\lambda$ bits, where $\lambda$ is the security parameter, compressing the public key is less obvious.

    Courtois~\cite[Section~5.1]{MR1934855} proposed an algorithm, which we call \textsf{KeyGen1}, that compresses the public key in $\lambda + mn \log q$ bits, where $\log$ is the logarithm in base $2$.
    This method was improved in \textsf{MR-DSS}~\cite[Section~4.4]{Bellini2022144} by reducing the compressed public key to $\lambda + (mn - k) \log q$ bits.
    This improvement, which we call \textsf{KeyGen2}, is employed by \textsf{MIRA}~\cite[Section~2.4.1]{MIRA_preprint}, while \textsf{MiRitH} uses \textsf{KeyGen1}~\cite[Section~3.2]{MiRitH_specs}.

    We propose a new key-generation algorithm for MinRank-based schemes, which we call \textsf{KeyGen3}, with a compressed public key of $\lambda + (m(n - r) - k) \log q$ bits.
    (Note that $k < m(n - r)$.
    In fact, all parameter sets satisfy the stronger inequality $k < (m - r)(n - r)$, in order to make the MinRank problem \emph{overdetermined}, see~Section~\ref{sec:minrank}).

    Table~\ref{tab:comparison} provides a comparison of the sizes of the public keys\footnote{Hereafter, we will say ``public key'', respectively ``secret key'', instead of ``compressed public key'', respectively ``compressed secret key'', since the difference will be always clear from the context.} of the three key-generation algorithms, for the parameter sets proposed for \textsf{MiRitH}~\cite[Table~1]{MiRitH_specs}.
    As it can be seen, the public-key size of \textsf{KeyGen3} is about 50\% of that of \textsf{KeyGen2}, and sits in the range of 328--676 bits for security levels of 128--256 bits.

    \begin{table}[ht]
        \begin{center}
            \begin{tabular}{c@{\hskip 5pt}|c@{\hskip 10pt}c@{\hskip 10pt}c@{\hskip 10pt}c@{\hskip 10pt}c@{\hskip 10pt}|c@{\hskip 10pt}c@{\hskip 10pt}c@{\hskip 10pt}}
                \toprule
                & \multicolumn{5}{c|}{parameters} & \multicolumn{3}{c}{public key (bits)} \\
                $\lambda$ & $q$ & $m$ & $n$ & $k$ & $r$ & \textsf{KeyGen1} & \textsf{KeyGen2} & \textsf{KeyGen3}  \\
                \midrule
                $128$ & $16$ & $15$ & $15$ & $78$ & $6$ & $1,\!028$ & $716$ & $356$ \\
                $128$ & $16$ & $16$ & $16$ & $142$ & $4$ & $1,\!152$ & $584$ & $328$ \\
                $192$ & $16$ & $19$ & $19$ & $109$ & $8$ & $1,\!636$ & $1,\!200$ & $592$ \\
                $192$ & $16$ & $19$ & $19$ & $167$ & $6$ & $1,\!636$ & $968$ & $512$ \\
                $256$ & $16$ & $21$ & $21$ & $189$ & $7$ & $2,\!020$ & $1,\!264$ & $676$ \\
                $256$ & $16$ & $22$ & $22$ & $254$ & $6$ & $2,\!192$ & $1,\!176$ & $648$ \\
                \bottomrule
            \end{tabular}
            \vspace{5pt}
        \end{center}
        \caption{Comparison of the sizes of the public keys, for the parameter sets proposed for \textsf{MiRitH}~\cite[Table~1]{MiRitH_specs}.}\label{tab:comparison}
    \end{table}

    The next theorem reduces the security of \textsf{KeyGen3} to that of \textsf{KeyGen1}.
    For every $x > 0$, let $\tau(x) := \min(0.72, 2.1 x)$.

    \begin{theorem}\label{thm:main}
        Let $\mathcal{A}$ be an attacker that, given a random public key generated by \textsf{KeyGen1}, is able to efficiently retrieve the corresponding secret key with probability~$p_1$.
        If $\mathcal{A}$ is given a random public key generated by \textsf{KeyGen3}, then $\mathcal{A}$ can retrieve the corresponding secret key with probability $p_3 < \big(1 - \tau(q^{-1})\big)^{-4} p_1$.
    \end{theorem}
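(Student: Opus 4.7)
The plan is to exhibit \textsf{KeyGen3} as (essentially) \textsf{KeyGen1} conditioned on a small number of ``genericity'' events on the underlying matrices, and then use a standard reduction: an adversary that succeeds with probability $p_3$ on the conditioned distribution succeeds with probability at least $p_3 \cdot \Pr[\text{genericity}]$ on the unconditioned one, which must be $\le p_1$. The factor $(1-\tau(q^{-1}))^{-4}$ in the statement tells us, before looking at the details, that there should be four such events, each failing with probability at most $\tau(q^{-1})$, where $\tau(q^{-1})$ is the familiar upper bound on the probability that a uniformly random matrix over $\mathbb{F}_q$ of given shape fails to have full rank (the $2.1\,q^{-1}$ branch is the standard $\prod(1-q^{-i})$-style bound, while the $0.72$ branch covers small parameter regimes).

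First, I would write out the distribution $\mathcal{D}_1$ on public keys produced by \textsf{KeyGen1} (essentially a uniformly random tuple $(M_0,\dots,M_k)$ conditioned on possessing a planted rank-$\le r$ combination, encoded through a seed plus the free $mn$ entries of $M_0$) and the distribution $\mathcal{D}_3$ produced by \textsf{KeyGen3}. My expectation is that \textsf{KeyGen3} operates by placing the low-rank matrix in a canonical form that exposes an $m\times(n-r)$ block whose information is redundant with the other data, and that this canonicalization is well-defined precisely when four auxiliary blocks (for instance, certain $r\times r$ or $(m-r)\times(m-r)$ minors used to pivot the factorization $E = AB^\transpose$ of the planted rank-$r$ matrix, together with a similar pivot used to absorb the linear span of $M_1,\dots,M_k$) are invertible. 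Each such block, being essentially uniform over $\mathbb{F}_q$ of some shape, is invertible with probability $\ge 1-\tau(q^{-1})$.

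The second step is to show that, conditionally on these four genericity events, the canonicalization is a \emph{bijection} between the conditioned support of $\mathcal{D}_1$ and the support of $\mathcal{D}_3$, and that it pushes the conditioned uniform measure to the uniform measure on the target. This is the bookkeeping part: one needs to check that the change of basis performed by \textsf{KeyGen3} is invertible, is measure-preserving on the free coordinates, and does not alter the secret key (so that an attacker who recovers the secret from the normalized instance also recovers it from the original one, and vice versa). Granted this, $\mathcal{D}_3$ equals $\mathcal{D}_1$ conditioned on an event of probability at least $(1-\tau(q^{-1}))^4$, assuming the four events are independent or at least that each is bounded by $1-\tau(q^{-1})$ even conditionally on the previous ones.

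Finally, the reduction is immediate. Consider the attacker $\mathcal{A}$ run on a $\mathcal{D}_1$-sample: with probability at least $(1-\tau(q^{-1}))^4$ the sample falls in the conditioning event, and on this event the sample is distributed exactly as $\mathcal{D}_3$, so $\mathcal{A}$ recovers the secret with conditional probability $p_3$. Hence $p_1 \ge (1-\tau(q^{-1}))^4\, p_3$, which rearranges to the claimed bound. The main obstacle, and where I would spend most of the effort, is step two: verifying that the canonicalization map is a measure-preserving bijection and pinpointing \emph{exactly} which four rank conditions are needed, because a sloppy count would give either an extra factor or a conditional-probability dependence that breaks the clean product bound.
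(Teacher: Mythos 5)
Your high-level strategy coincides with the paper's: run $\mathcal{A}$ on a \textsf{KeyGen1} instance, condition on a genericity event of probability at least $\big(1-\tau(q^{-1})\big)^4$ under which the instance can be transformed into one distributed exactly as a \textsf{KeyGen3} output, and conclude $p_1 \geq \big(1-\tau(q^{-1})\big)^4 p_3$. The gap is that the two pieces you defer are exactly where the content lies, and your guesses for the four conditions are off in a way that matters. The actual factors are: (i) the last $r$ columns $E^{\mathrm{R}}$ of the planted rank-$r$ matrix span its column space, so that $E^{\mathrm{L}} = E^{\mathrm{R}}K$ for a unique, uniformly distributed $K$ (Lemma~\ref{lem:probability-EL-equal-ER-K}); (ii) the $k \times k$ block $L_1$ built from the first $k$ column-major entries of $M_1,\dots,M_k$ is invertible, so the instance reduces to canonical form in $\mathcal{C}_0 \times \mathcal{C}_1$ (Lemma~\ref{lem:probability-canonical-form}); and (iii)--(iv) the matrix $I - X$ with entries $\langle M_j^{\prime\mathrm{R}} K\rangle_i$ is invertible, which is what makes the linear system~\eqref{equ:alpha-system} determining $\alpha_1,\dots,\alpha_k$ in \textsf{KeyGen3} uniquely solvable; bounding this costs \emph{two} factors of $1-\tau(q^{-1})$ (Lemma~\ref{lem:probability-I-X}), one for an auxiliary $r\times r$ block $K_1$ of $K$ and one for the resulting conditionally uniform square matrix. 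Your proposed events (pivots of a factorization $E = AB^{\transpose}$ and a pivot absorbing the span of $M_1,\dots,M_k$) capture something like (i) and (ii) but nothing like (iii)--(iv), which is a condition on a \emph{derived} matrix rather than on the raw randomness; without it the map onto \textsf{KeyGen3}'s support is not even well defined, and a naive count would not produce the exponent $4$.

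The second step is also not the routine measure-preservation you describe. \textsf{KeyGen3}'s output is not literally ``\textsf{KeyGen1} conditioned on an event'': the canonicalization changes the instance and replaces the secret by $\alpha' = \alpha L_1 + \ell_1$ (interconvertible with, but not equal to, the original secret --- your parenthetical covers this, your phrase ``does not alter the secret key'' does not). The paper establishes the distributional identity by introducing the set $\mathcal{S}$ of consistent tuples $(M_0,\dots,M_k,E,\alpha_1,\dots,\alpha_k)$ and exhibiting two different parametrizations of it: by $(M_1,\dots,M_k,E,\alpha)$, under which the canonicalized \textsf{KeyGen1} output is visibly uniform on $\mathcal{S}$, and by $(M_0^{\mathrm{R}},M_1,\dots,M_k,K)$, under which the \textsf{KeyGen3} output is visibly uniform on $\mathcal{S}$. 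Some such double-counting argument is needed precisely because the two generators expose different free coordinates; until you supply it, together with the correct four conditions, what you have is a plausible outline rather than a proof.
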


    Note that, if we take $q = 16$ as in Table~\ref{tab:comparison}, then $\big(1 - \tau(q^{-1})\big)^{-4} < 1.76$.
    The structure of the paper is the following.
    First, in Section~\ref{sec:prelim}, we provide the necessary notation (Section~\ref{sec:notation}), the formal definition of the MinRank problem (Section~\ref{sec:minrank}), and we recall the key-generation algorithm \textsf{KeyGen1} of Courtois (Section~\ref{sec:courtois}).
    Second, in Section~\ref{sec:new}, we describe our new key-generation algorithm \textsf{KeyGen3}.
    To simplify the exposition, we show first a partial (less efficient) version of the algorithm (Section~\ref{sec:first-impro}), and then, after recalling a canonical form for MinRank instances (Section~\ref{sec:canonical}), we show the complete algorithm (Section~\ref{sec:complete}).
    Finally, in Section~\ref{sec:proofs}, we prove Theorem~\ref{thm:main}.

    \section{Preliminaries}\label{sec:prelim}

    \subsection{Notation}\label{sec:notation}

    Let $\mathbb{F}_q$ be a finite field of $q$ elements.
    For all positive integers $m$, $n$, and $r \leq \min(m, n)$, let $\mathbb{F}_q^{m \times n}$ be the vector space of $m \times n$ matrices over $\mathbb{F}_q$, and let $\mathbb{F}_q^{m \times n, r}$ be the set of $m \times n$ matrices over $\mathbb{F}_q$ having rank equal to~$r$.
    For every $A \in \mathbb{F}_q^{m \times n}$, let $A^\transpose \in \mathbb{F}_q^{n \times m}$ be the transpose of $A$.
    Moreover, let $A^{\mathrm{L}} \in \mathbb{F}_q^{m \times (n - r)}$, respectively $A^{\mathrm{R}} \in \mathbb{F}_q^{m \times r}$, denote the matrix consisting of the first $n - r$, respectively the last $r$, columns of $A$, so that $A = (A^{\mathrm{L}} \mid A^{\mathrm{R}})$.
    Note that $r$ is omitted in the notation $A^{\mathrm{L}}$ and $A^{\mathrm{R}}$, but it will be always clear from the context.
    Let $\langle A \rangle \in \mathbb{F}_q^{1 \times mn}$ denote the row vector consisting of the entries of $A$ in column-major order, that is, the entries of $\langle A \rangle$ are, in order, the entries of the first column of $A$, followed by the entries of the second column of $A$, etc.
    Let $\langle A \rangle_i$ be the $i$th entry of $\langle A \rangle$.
    Let $I_s$, or just $I$ when the dimension is clear from the context, be the identity matrix of $\mathbb{F}_q^{s \times s}$.
    With a slight abuse of notation, let $0$ denote the zero matrix of $\mathbb{F}_q^{s \times t}$, the dimension $s \times t$ being always clear from the context.
    Finally, let $\delta_{i,j}$ be the Kronecker delta, let $\#S$ be the cardinality of the finite set $S$, and let $\lvert \textsf{obj} \rvert$ be the size in bits of the object $\textsf{obj}$.

    \subsection{MinRank}\label{sec:minrank}

    The search version of MinRank is formally defined as follows.

    \begin{definition}[MinRank]
        Let $q, m, n, k, r$ be positive integers, with $q$ a prime power and $m \geq n > r$.
        Given $k + 1$ matrices $M_0, \dots, M_k \in \mathbb{F}_q^{m \times n}$, the MinRank problem asks to find $\alpha_1, \dots, \alpha_k \in \mathbb{F}_q$ (if they exist) such that
        \begin{equation}\label{equ:E-def}
            E := M_0 + \sum_{i = 1}^k \alpha_i M_i
        \end{equation}
        has rank at most $r$.
    \end{definition}

    In MinRank-based schemes, the parameters $q,m,n,k,r$ are selected so that: every known algorithm to find a solution of MinRank with \mbox{$\rank(E) = r$} requires on average $2^\lambda$ operations; and random instances of MinRank are expected to have exactly one solution with overwhelming probability.
    Consequently, the schemes have to construct the solution so that $\rank(E) = r$.
    Furthermore, to enforce the uniqueness of the solution, it is required that MinRank is \emph{overdetermined}, that is, $k < (m - r)(n - r)$ \cite[p.~33]{MR3042659}.
    For details on the algorithms to solve MinRank, and consequentially on the selection of the parameters of MinRank-based schemes, see for example the documentation of \textsf{MiRitH}~\cite[Sections~4 and~5]{MiRitH_specs}.

    \subsection{The key-generation algorithm of Courtois}\label{sec:courtois}

    We begin by briefly reviewing the algorithms proposed by Courtois~\cite[Section~5.1]{MR1934855} to generate and decompress the public key and the secret key, see Figure~\ref{sch:courtois}.
    It is clear that $\textsf{KeyGen1}$ in Figure~\ref{sch:courtois} generates a random uniformly distributed instance of MinRank, and that the public key has a size of $\lvert \seed_{\pk} \rvert + \lvert M_0 \rvert = \lambda + mn \log q$ bits.
    The most computationally expensive step (not taking into account the cost of running the PRG) is the generation of $E$, which Courtois suggested to compute as $E = SLT$, where $L \in \mathbb{F}_q^{m \times n, r}$ is a fixed matrix and $S \in \mathbb{F}_q^{m \times m}$ and $T \in \mathbb{F}_q^{n \times n}$ are pseudorandom invertible matrices.

    \begin{figure}[ht]
        \begin{center}
            \includegraphics[width=\textwidth]{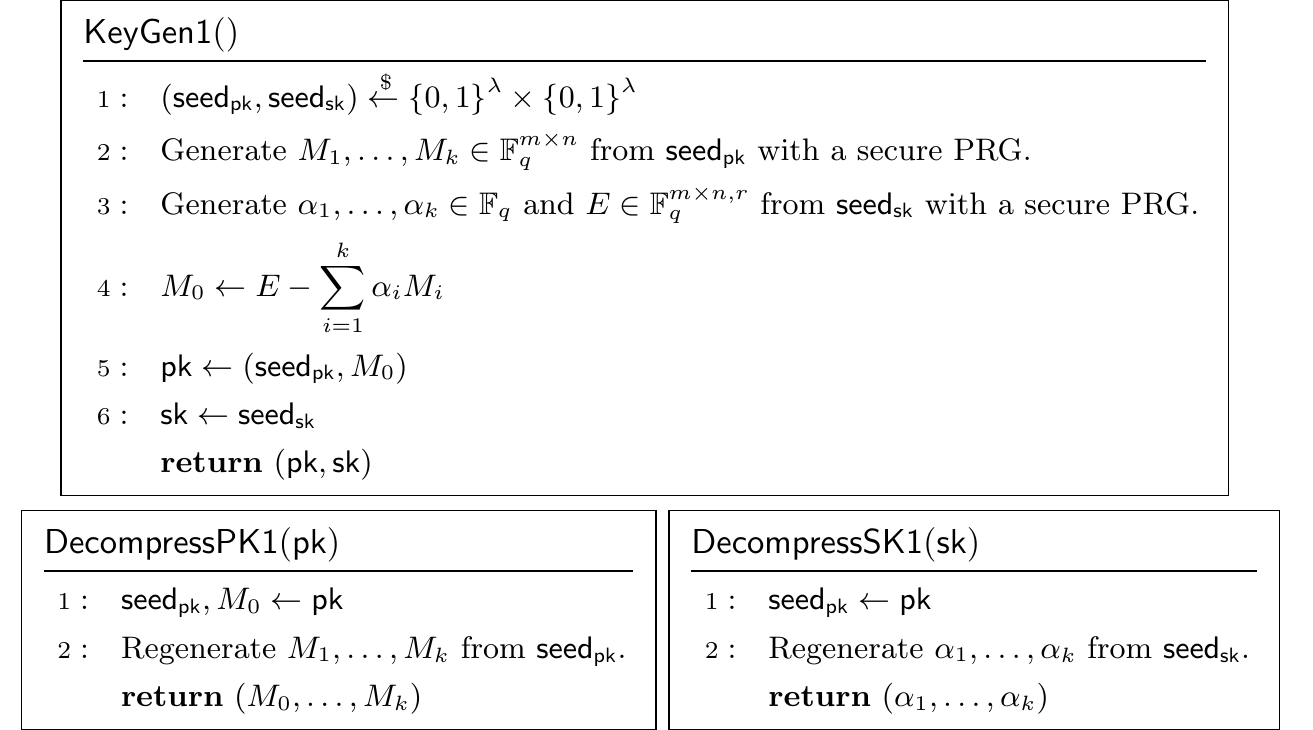}
        \end{center}
        \caption{The algorithms of Courtois to generate and decompress the keys.\protect\footnotemark}\label{sch:courtois}
    \end{figure}

    \section{New key-generation algorithm}\label{sec:new}

    \footnotetext{Actually, the key-generation algorithm in~\cite[Section~5.1]{MR1934855} is slightly different from that of Figure~\ref{sch:courtois} ($M_k$ plays the role of $M_0$, and consequently a division by $\alpha_k$ is necessary).
        However, this makes no difference in later arguments.
        We stated the key-generation algorithm this way only to uniformize it with the other algorithms.}

    \subsection{A first improvement}\label{sec:first-impro}

    To simplify the exposition, we provide first a key-generation algorithm with a public key of $\lambda + m(n - r) \log q$ bits.

    This algorithm employs the facts that: if $E \in \mathbb{F}_q^{m \times n, r}$ is taken at random with uniform distribution, then $E^{\mathrm{R}} \in \mathbb{F}_q^{m \times n, r}$ with significant probability (Lemma~\ref{lem:probability-EL-equal-ER-K}); and, in such a case, there exists a unique matrix $K \in \mathbb{F}_q^{r \times (n - r)}$ such that $E^{\mathrm{L}} = E^{\mathrm{R}} K$ (Lemma~\ref{lem:characterization-EL-equal-ER-K}).
    Then, assuming that $E^{\mathrm{L}} = E^{\mathrm{R}} K$, it follows from~\eqref{equ:E-def} that
    \begin{equation}\label{equ:M0L}
        M_0^{\mathrm{L}} = E^{\mathrm{R}} K -\sum_{i = 1}^k \alpha_i M_i^{\mathrm{L}} .
    \end{equation}
    Hence, we can generate pseudorandom $M_0^{\mathrm{R}}$, $M_1, \dots, M_k$, and $K$, compute
    \begin{equation}\label{equ:ER}
        E^{\mathrm{R}} = M_0^{\mathrm{R}} + \sum_{i=1}^k \alpha_i M_i^{\mathrm{R}}
    \end{equation}
    and $M_0^{\mathrm{L}}$ via~\eqref{equ:M0L}, and finally pack $M_0^{\mathrm{L}}$ into the public key.
    See Figure~\ref{sch:improved1} for the details.
    In this way, the size in bits of the public key is equal to
    \begin{equation*}
        \lvert \seed_{\pk} \rvert + \lvert M_0^{\textrm{L}} \rvert = \lambda + m(n - r) \log q .
    \end{equation*}
    Note that we cannot be sure that the matrix $E^{\mathrm{R}}$ computed by~\eqref{equ:ER} has full rank (this, by $E^{\mathrm{L}} = E^{\mathrm{R}} K$, is equivalent to $\rank(E) = r$).
    Therefore, we have to test if $\rank(E^{\mathrm{R}}) < r$ (step~5 of $\textsf{KeyGen}$ in Figure~\ref{sch:improved1}).
    Since $E^{\mathrm{R}}$ is a uniformly distributed random matrix in $\mathbb{F}_q^{m \times r}$, the probability that $E^{\mathrm{R}}$ is not full-rank is very small (less than $2^{-38.9}$ for the parameters in Table~\ref{tab:comparison}), see Lemma~\ref{lem:probability-full-rank}.
    Hence, the test has to be repeated only for a few times before finding a matrix $E^{\mathrm{R}}$ of full-rank.

    Furthermore, note that checking if $\rank(E^{\mathrm{R}}) < r$ must be done in way that prevents timing attacks, so either by a constant-time algorithm (see~\cite{10.1007/978-3-642-40349-1_15} for constant-time Gaussian elimination), or by a non-constant time algorithm that do not leak information about $E^{\mathrm{R}}$.
    For instance, one can multiply $E^{\mathrm{R}}$ on the left and on the right by random invertible matrices, and then check if the resulting product has rank less than $r$, so that the no information on $E^{\mathrm{R}}$ is leaked from the execution time.

    \begin{figure}[ht]
        \begin{center}
            \includegraphics[width=\textwidth]{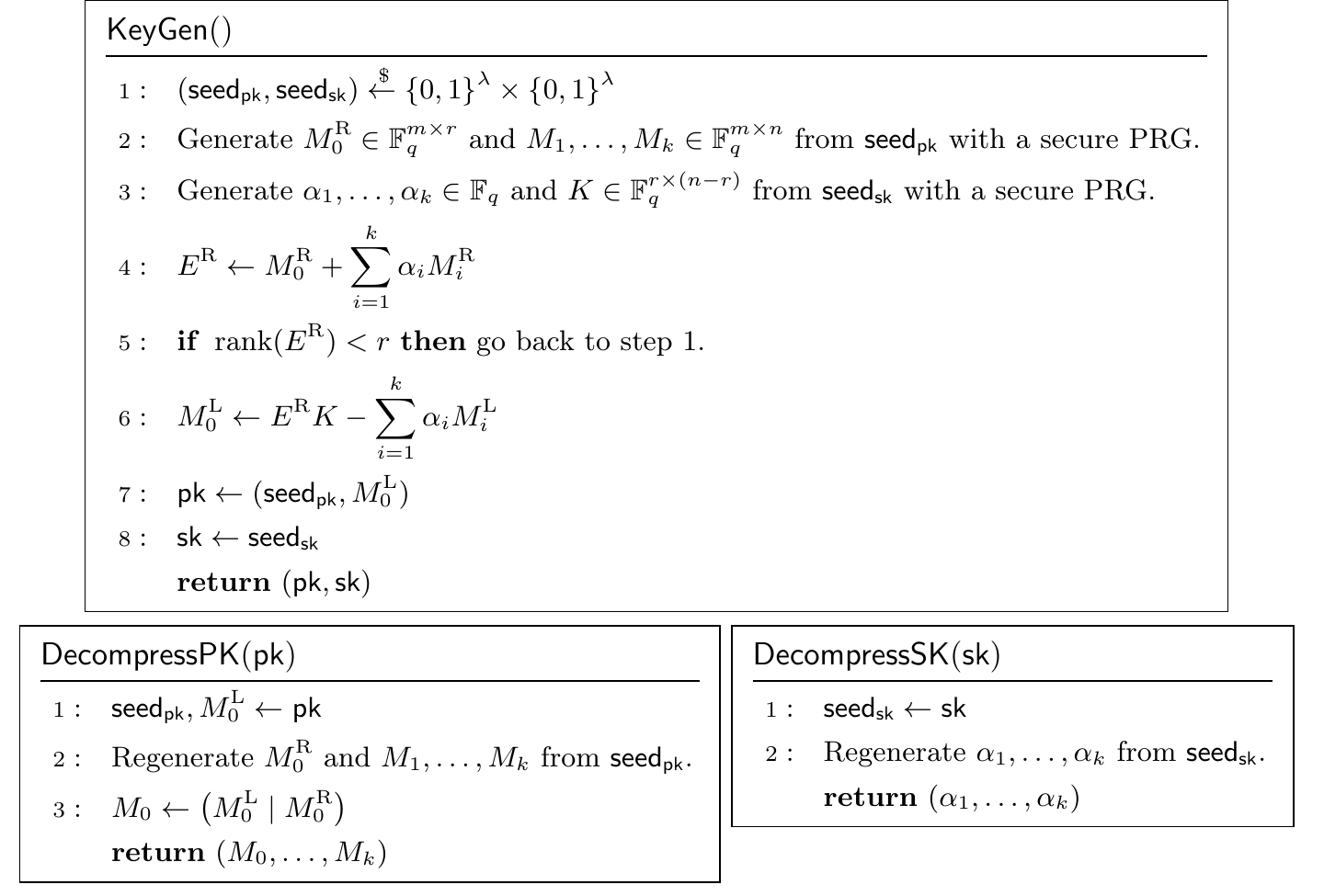}
        \end{center}
        \caption{First version of the improved key-generation algorithm.}\label{sch:improved1}
    \end{figure}

    \subsection{Canonical form of MinRank instances}\label{sec:canonical}

    In this section, we recall a canonical form of MinRank instances that was first introduced in~\cite[Section~4.4]{Bellini2022144}.

    Given a MinRank instance $\mathcal{M} = (M_0, \dots, M_k) \in (\mathbb{F}_q^{m \times n})^{k + 1}$, let $L \in \mathbb{F}_q^{(k + 1) \times mn}$ be the matrix whose rows are $\langle M_1 \rangle, \dots, \langle M_k \rangle$ and $\langle M_0 \rangle$, in this order.
    Furthermore, write
    \begin{equation*}
        L = \begin{pmatrix}\begin{array}{c|c}
                L_1 & L_2 \\[2pt] \hline \\[-9pt]
                \ell_1 & \ell_2
        \end{array}\end{pmatrix} ,
    \end{equation*}
    where $L_1 \in \mathbb{F}_q^{k \times k}$, $L_2 \in \mathbb{F}_q^{k \times (mn - k)}$, $\ell_1 \in \mathbb{F}_q^{1 \times k}$, and $\ell_2 \in \mathbb{F}_q^{1 \times (mn - k)}$.

    If $L_1$ is invertible, then we say that $\mathcal{M}$ is \emph{reducible to canonical form} and that the \emph{canonical form} of $\mathcal{M}$ is $\mathcal{M}^\prime := (M_0^\prime, \dots, M_k^\prime) \in (\mathbb{F}_q^{m \times n})^{k + 1}$, where $\langle M_1^\prime \rangle, \dots, \langle M_k^\prime \rangle$ and $\langle M_0^\prime \rangle$ are the rows, in this order, of the matrix
    \begin{equation*}
        L^\prime := \begin{pmatrix}\begin{array}{c|c}
                L_1^{-1} & 0 \\[2pt] \hline \\[-9pt]
                -\ell_1 L_1^{-1} & 1
        \end{array}\end{pmatrix} L = \begin{pmatrix}\begin{array}{c|c}
                I_k & L_1^{-1} L_2 \\[2pt] \hline \\[-9pt]
                0 & \ell_2 - \ell_1 L_1^{-1} L_2
        \end{array}\end{pmatrix} .
    \end{equation*}
    In particular, we have that $(M_0^\prime, \dots, M_k^\prime) \in \mathcal{C}_0 \times \mathcal{C}_1$, where
    \begin{equation*}
        \mathcal{C}_0 := \big\{N \in \mathbb{F}_q^{m \times n} : \langle N \rangle_i = 0 \text{ for } i \in \{1, \dots, k\}\big\}
    \end{equation*}
    and
    \begin{equation*}
        \mathcal{C}_1 := \big\{(N_1, \dots, N_k) \in \mathbb{F}_q^{m \times n} : \langle N_i \rangle_j = \delta_{i,j} \text{ for } i, j \in \{1, \dots, k\}\big\} .
    \end{equation*}
    In general, we say that MinRank instances belonging to $\mathcal{C}_0 \times \mathcal{C}_1$ are in \emph{canonical form}.
    If $\mathcal{M}$ is reducible to the canonical form $\mathcal{M}^\prime$, then an easy computation shows that \eqref{equ:E-def} is equivalent to
    \begin{equation*}
        E := M_0^\prime + \sum_{i = 1}^k \alpha_i^\prime M_i^\prime ,
    \end{equation*}
    where
    \begin{equation}\label{equ:canonical-alpha}
        \begin{pmatrix} \alpha_1^\prime & \cdots & \alpha_k^\prime \end{pmatrix}
        = \begin{pmatrix} \alpha_1 & \cdots & \alpha_k \end{pmatrix} L_1 + \ell_1 .
    \end{equation}
    Consequently, finding a solution to the instance $\mathcal{M}$ is equivalent to finding a solution to the instance $\mathcal{M}^\prime$.

    \subsection{The complete algorithm}\label{sec:complete}

    \begin{figure}[ht]
        \begin{center}
            \includegraphics[width=\textwidth]{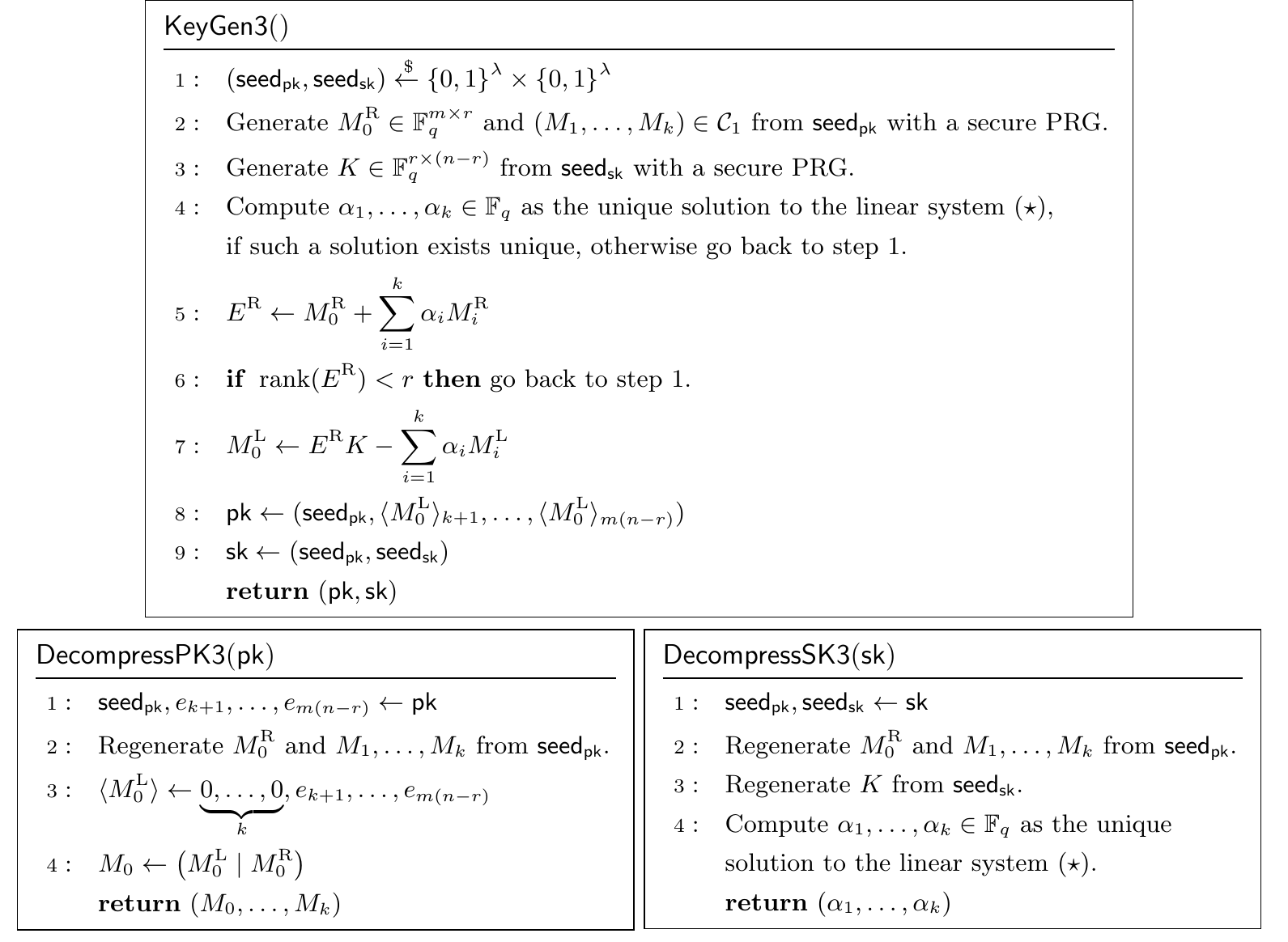}
        \end{center}
        \caption{The proposed key-generation algorithm.}\label{sch:improved2}
    \end{figure}

    Now we can provide the key-generation algorithm with a public key of $\lambda + (m(n - r) - k) \log q$ bits.

    The idea is to generate $M_0, \dots, M_k$ so that they are in canonical form.
    In this way, the first $k$ entries of $\langle M_0^{\mathrm{L}} \rangle$ are equal to $0$, and there is no need to pack them into the public key.
    Thus the size of the public key is reduced to $\lambda + (m(n - r) - k) \log q$ bits.

    The \textsf{KeyGen} algorithm of Figure~\ref{sch:improved1} can be easily modified to generate $(M_1, \dots, M_k) \in \mathcal{C}_1$.
    However, the way in which $M_0^{\mathrm{L}}$ is computed does not guarantee that $M_0, \dots, M_k$ are in canonical form, i.e., that $M_0 \in \mathcal{C}_0$.
    To achieve that, we have to choose $\alpha_1, \dots, \alpha_k$ so that the first $k$ entries of $\langle M_0^{\mathrm{L}} \rangle$ are equal to $0$.
    Since
    \begin{equation*}
        M_0^{\mathrm{L}} = \left(M_0^{\mathrm{R}} + \sum_{j = 1}^k \alpha_j M_j^{\mathrm{R}}\right) \! K - \sum_{j = 1}^k \alpha_j M_j^{\mathrm{L}}
    \end{equation*}
    and $\langle M_i^{\mathrm{L}} \rangle_j = \delta_{i, j}$ for $i, j \in \{1, \dots, k\}$ (note that $k < m(n - r)$), this amount to solving the linear system
    \begin{equation}\label{equ:alpha-system}\tag{$\star$}
        \sum_{j = 1}^k \left(\delta_{i,j} - \langle M_j^{\mathrm{R}} K \rangle_i \right) \alpha_j = \langle M_0^{\mathrm{R}} K \rangle_i \quad (i=1,\dots,k) .
    \end{equation}
    We will prove that~\eqref{equ:alpha-system} has a unique solution with high probability (Lemma~\ref{lem:probability-I-X}).
    The algorithms for the generation of the keys and their decompression are given in Figure~\ref{sch:improved2}.

    Note that solving~\eqref{equ:alpha-system} must be done in constant time, in order to protect the secret $\alpha_1, \dots, \alpha_k$ from timing attacks.
    Furthermore, note that this construction requires to store $\seed_{\pk}$ into the secret key.
    However, this should not be an issue since, usually, whoever has the secret key also has the public key.

    \section{Proof of Theorem~\ref{thm:main}}\label{sec:proofs}

    \subsection{Preliminaries}

    In this section, we collect some preliminary lemmas.
    We begin with the following inequality.

    \begin{lemma}\label{lem:tau-bound}
        We have that
        \begin{equation}\label{equ:q-product}
            \prod_{j = s}^\infty (1 - q^{-j}) > 1 - \tau(q^{-s})
        \end{equation}
        for all integers $s \geq 1$.
    \end{lemma}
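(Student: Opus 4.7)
The plan is to rearrange (\ref{equ:q-product}) into the equivalent form $1 - P_s < \tau(q^{-s})$, where $P_s := \prod_{j = s}^\infty (1 - q^{-j})$, and to split the argument according to which branch of the $\min$ defining $\tau$ is active. The main tool will be the elementary product-sum inequality $1 - \prod_i (1 - a_i) \leq \sum_i a_i$ valid for any $a_i \in [0, 1]$; the finite version is a one-line induction on the number of factors, and the countable version follows by passing to the limit.

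First I would apply this tool with $a_j = q^{-j}$ for $j \geq s$ to obtain
\[
1 - P_s \;\leq\; \sum_{j = s}^\infty q^{-j} \;=\; \frac{q^{-s}}{1 - q^{-1}} \;\leq\; 2\,q^{-s},
\]
where the last inequality uses $q \geq 2$ (since $q$ is a prime power). Second, I would observe that the crossover in $\tau$ happens at $q^{-s} = 12/35$: in the ``small'' regime $q^{-s} \leq 12/35$ the previous bound already yields $1 - P_s \leq 2 q^{-s} < 2.1 q^{-s} = \tau(q^{-s})$, with strict inequality since $q^{-s} > 0$. Because $q \geq 2$ is an integer and $s \geq 1$, the only pair with $q^{-s} > 12/35$ is $(q, s) = (2, 1)$, so this step settles every case except one.

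The remaining case $q = 2$, $s = 1$ (where $\tau(1/2) = 0.72$) is the main obstacle, since the bound $1 - P_s \leq 2 q^{-s}$ is vacuous there. I would handle it by keeping the first three factors of $P_1$ exactly and applying the product-sum inequality only to the tail:
\[
P_1 \;\geq\; \frac{1}{2} \cdot \frac{3}{4} \cdot \frac{7}{8} \, \Big(1 - \sum_{j = 4}^\infty 2^{-j}\Big) \;=\; \frac{147}{512} \;>\; 0.28 \;=\; 1 - \tau(1/2),
\]
the final comparison reducing to $147 > 0.28 \cdot 512 = 143.36$. Everywhere else the generous constants $0.72$ and $2.1$ in the definition of $\tau$ make the simple bound via $2 q^{-s}$ succeed with room to spare; the calibration of $\tau$ appears designed precisely so that this one boundary case just barely works out.
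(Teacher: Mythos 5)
Your proof is correct, and while it shares the paper's overall skeleton---a generic bound of the form $1 - P_s \leq C q^{-s}$ with $C < 2.1$, plus a separate direct computation for the single exceptional pair $(q,s) = (2,1)$ in which the first three factors are kept exactly and only the tail is estimated---the key estimate is obtained by a genuinely different and more elementary tool. The paper derives its constant from the concavity of the logarithm: it writes $\ln(1-x) \geq -c_0 x$ with $c_0 = -\ln(1-x_0)/x_0$, exponentiates, uses $e^{-x} > 1 - x$, and then needs the small calculus fact $c_0(1/8) < 1.1$ to conclude $1 + c_0 < 2.1$. You replace all of that with the Weierstrass-type inequality $1 - \prod_i (1 - a_i) \leq \sum_i a_i$, which immediately gives $1 - P_s \leq q^{-s} \cdot q/(q-1) \leq 2 q^{-s}$; this avoids logarithms and exponentials entirely and in fact yields a slightly sharper constant ($2$ versus the paper's $1 + c_0 \leq 2.1$). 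Your way of organizing the case split---locating the branch point of $\tau$ at $q^{-s} = 12/35$ and noting that only $q^s = 2$ exceeds it---is cleaner than the paper's condition $q^{s+1} \geq 8$, though both isolate exactly the same exceptional case. The numerics all check out: $\sum_{j \geq 4} 2^{-j} = 1/8$, so your bound for the exceptional case is $\tfrac12 \cdot \tfrac34 \cdot \tfrac78 \cdot \tfrac78 = 147/512 \approx 0.2871 > 0.28$, marginally better than the paper's $0.2851$.
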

    \begin{proof}
        Let $P_s(q)$ denote the product in~\eqref{equ:q-product}.
        First, suppose that $q^{s + 1} \geq 8$.
        Since the logarithm is concave, we have that $\ln(1 - x) \geq - c_0 x$, for all $x_0 \in {(0, 1)}$ and $x \in [0, x_0]$, where
        \begin{equation*}
            c_0 = c_0(x_0) := -\frac{\ln(1 - x_0)}{x_0} > 0.
        \end{equation*}
        Hence, taking $x_0 = q^{-(s+1)}$, we get that
        \begin{align*}
            P_{s+1}(q) \geq \exp\!\left(-c_0 \sum_{j = s + 1}^\infty q^{-j}\right) = \exp\!\left(-\frac{c_0 q^{-(s+1)}}{1 - q^{-1}}\right) > 1 - \frac{c_0 q^{-(s+1)}}{1 - q^{-1}} ,
        \end{align*}
        where we also used the fact that $\exp(-x) > 1 - x$ for all $x > 0$.
        Therefore, we obtain that
        \begin{equation*}
            P_s(q) > (1 - q^{-s})\left(1 - \frac{c_0 q^{-(s+1)}}{1 - q^{-1}}\right) > 1 - \left(1 + \frac{c_0}{q - 1}\right) q^{-s} .
        \end{equation*}
        Since $c_0(x_0)$ is an increasing function of $x_0$, it follows that $c_0(x_0) \leq c_0(1/8) < 1.1$.
        Hence, we get that
        \begin{equation*}
            P_s(q) > 1 - (1 + c_0) q^{-s} > 1 - 2.1 q^{-s} = 1 - \tau(q^{-s}) ,
        \end{equation*}
        since $2.1 q^{-s} < 0.72$.

        Now suppose that $q^{s + 1} < 8$.
        Then $q = 2$ and $s = 1$.
        Moreover, we get that
        \begin{align*}
            P_s(q) &= (1 - 2^{-1})(1 - 2^{-2})(1 - 2^{-3}) P_4(2) \\
            &> (1 - 2^{-1})(1 - 2^{-2})(1 - 2^{-3}) (1 - 2.1 \cdot 2^{-4}) \\
            &> 1 - 0.72 \\
            &= 1 - \tau(q^{-s}) ,
        \end{align*}
        since $2.1 q^{-s} > 0.72$.
        The proof is complete.
    \end{proof}

    The next lemma provides a formula for the number of $m \times n$ matrices of rank $r$ over $\mathbb{F}_q$.

    \begin{lemma}\label{lem:rank-count}
        We have that
        \begin{equation*}
            \# \mathbb{F}_q^{m \times n, r} = \prod_{i = 0}^{r - 1} \frac{(q^m - q^i)(q^n - q^i)}{q^r - q^i} .
        \end{equation*}
    \end{lemma}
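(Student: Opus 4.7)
The plan is to count rank-$r$ matrices via the rank factorization $M = AB$ with $A \in \mathbb{F}_q^{m \times r}$ and $B \in \mathbb{F}_q^{r \times n}$, both of rank $r$. This is the cleanest route because the three combinatorial quantities that appear—injective linear maps $\mathbb{F}_q^r \hookrightarrow \mathbb{F}_q^m$, injective linear maps $\mathbb{F}_q^r \hookrightarrow \mathbb{F}_q^n$ (transposed), and the automorphisms of $\mathbb{F}_q^r$—match the three factors $(q^m-q^i)$, $(q^n-q^i)$, and $(q^r-q^i)$ in the claimed formula.

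First I would set
\[
\mathcal{P} := \big\{(A,B) \in \mathbb{F}_q^{m \times r, r} \times \mathbb{F}_q^{r \times n, r}\big\}
\]
and consider the map $\pi : \mathcal{P} \to \mathbb{F}_q^{m \times n, r}$ defined by $\pi(A,B) := AB$. Standard linear algebra (rank factorization) shows that $\pi$ is surjective: any $M$ of rank $r$ factors as $AB$ by picking a basis of the column space of $M$ for the columns of $A$ and reading off coordinates for $B$. Moreover, two factorizations $(A,B)$ and $(A',B')$ give the same product if and only if $A' = AG$ and $B' = G^{-1}B$ for some $G \in \mathrm{GL}_r(\mathbb{F}_q)$; this follows from the fact that $A$ and $A'$ have the same column space, so are related by a change of basis. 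Thus every fiber of $\pi$ has size exactly $\#\mathrm{GL}_r(\mathbb{F}_q)$, giving
\[
\# \mathbb{F}_q^{m \times n, r} = \frac{\# \mathbb{F}_q^{m \times r, r} \cdot \# \mathbb{F}_q^{r \times n, r}}{\# \mathrm{GL}_r(\mathbb{F}_q)}.
\]

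Next I would count the three factors by a straightforward column-by-column argument. A matrix in $\mathbb{F}_q^{m \times r, r}$ is an ordered $r$-tuple of linearly independent vectors in $\mathbb{F}_q^m$: the first column can be any nonzero vector ($q^m - 1$ choices), the $(i+1)$-th column must avoid the $q^i$-element span of the previous ones, giving $\prod_{i=0}^{r-1}(q^m - q^i)$ in total. The same argument applied to $B^\transpose$ yields $\# \mathbb{F}_q^{r \times n, r} = \prod_{i=0}^{r-1}(q^n - q^i)$, and the special case $m = r$ gives $\#\mathrm{GL}_r(\mathbb{F}_q) = \prod_{i=0}^{r-1}(q^r - q^i)$.

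Substituting these three products into the ratio above collapses to $\prod_{i=0}^{r-1}\frac{(q^m-q^i)(q^n-q^i)}{q^r-q^i}$, as required. There is no real obstacle here; the only point requiring a sentence of care is the fiber computation for $\pi$, i.e.\ verifying that $AB = A'B'$ forces $A' = AG$ and $B' = G^{-1}B$ for a unique $G \in \mathrm{GL}_r(\mathbb{F}_q)$, which follows from the injectivity of $A$ (as a linear map) and the surjectivity of $B$.
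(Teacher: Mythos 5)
Your proof is correct and complete. The paper itself does not prove Lemma~\ref{lem:rank-count}; it only cites Fisher and Alexander's classroom note, so there is no in-paper argument to compare against. Your rank-factorization double count is the standard derivation and every step is justified: surjectivity of $\pi$ via choosing a column-space basis, the fiber computation (where you correctly isolate the one point needing care, namely that $AB = A'B'$ forces $A' = AG$ and $B' = G^{-1}B$ for a unique $G \in \mathrm{GL}_r(\mathbb{F}_q)$, using injectivity of $A$ and surjectivity of $B$ as linear maps), and the column-by-column counts of the three factors. In effect you have supplied the proof the paper outsources, and it would stand on its own as a replacement for the citation.
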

    \begin{proof}
        See, e.g., \cite{MR1533848}.
    \end{proof}

    The next three results are well known (more or less in these forms), but we include their proofs for completeness.

    \begin{lemma}\label{lem:probability-full-rank}
        Let $s, t$ be positive integers, and let $A \in \mathbb{F}_q^{s \times t}$ be a random matrix taken with uniform distribution.
        Then the probability that $\rank(A) = \min(s,t)$ is greater than $1 - \tau(q^{-\lvert s - t \rvert - 1})$.
    \end{lemma}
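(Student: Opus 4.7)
The plan is to reduce to the rectangular (tall) case and compute the full-rank probability by the standard column-picking argument, then apply Lemma~\ref{lem:tau-bound} to bound the resulting finite product by an infinite product.

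First, since $\rank(A) = \rank(A^\transpose)$ and $A$ uniformly distributed on $\mathbb{F}_q^{s \times t}$ corresponds to $A^\transpose$ uniformly distributed on $\mathbb{F}_q^{t \times s}$, I can assume without loss of generality that $s \geq t$, so that $\min(s, t) = t$ and $\lvert s - t \rvert = s - t$. Full rank then means that the $t$ columns of $A$ are linearly independent in $\mathbb{F}_q^s$.

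Next, I would count the favorable configurations in the standard way. The first column can be any of the $q^s - 1$ nonzero vectors (out of $q^s$ possibilities); the $(i+1)$-th column can be any vector outside the span of the previous $i$ columns, giving $q^s - q^i$ choices out of $q^s$. Thus
\begin{equation*}
    \Pr[\rank(A) = t] = \prod_{i = 0}^{t - 1} \frac{q^s - q^i}{q^s} = \prod_{i = 0}^{t - 1} (1 - q^{i - s}) = \prod_{j = s - t + 1}^{s} (1 - q^{-j}) .
\end{equation*}
This finite product over $j \in \{s - t + 1, \dots, s\}$ is clearly at least the corresponding infinite tail $\prod_{j = s - t + 1}^{\infty} (1 - q^{-j})$, since we are only omitting additional factors in $(0, 1)$.

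Finally, applying Lemma~\ref{lem:tau-bound} with $s$ replaced by $s - t + 1 = \lvert s - t \rvert + 1 \geq 1$, this infinite product exceeds $1 - \tau(q^{-\lvert s - t \rvert - 1})$, yielding the claim. No real obstacle here; the only thing to be careful about is the index bookkeeping (the reindexing $j = s - i$ and the identification $s - t + 1 = \lvert s - t \rvert + 1$) and noting that Lemma~\ref{lem:tau-bound} requires the starting index to be at least $1$, which holds since $s \geq t \geq 1$ forces $s - t + 1 \geq 1$.
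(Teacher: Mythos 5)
Your proof is correct and follows essentially the same route as the paper: reduce to $s \geq t$ by transposition, obtain the product $\prod_{i=0}^{t-1}(1 - q^{i-s})$, and bound it below by the infinite tail via Lemma~\ref{lem:tau-bound}. The only cosmetic difference is that you derive the count by the column-picking argument directly, whereas the paper cites its Lemma~\ref{lem:rank-count} for the number of full-rank matrices; these amount to the same computation.
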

    \begin{proof}
        Since $\rank(A^\transpose) = \rank(A)$, we can assume that $s \geq t$.
        Hence, the probability that $\rank(A) = \min(s, t)$ is equal to the probability that $A \in \mathbb{F}_q^{s \times t, t}$.
        In turn, by Lemma~\ref{lem:rank-count}, such a probability is equal to
        \begin{equation*}
            \frac{\#\mathbb{F}_q^{s \times t, t}}{\#\mathbb{F}_q^{s \times t}} = \left(\prod_{i=0}^{t - 1} (q^s - q^i)\right) \cdot q^{-st} = \prod_{i=0}^{t - 1} (1 - q^{i-s}) > \prod_{j = s - t + 1}^\infty (1 - q^{-j}) ,
        \end{equation*}
        and the claim follows from Lemma~\ref{lem:tau-bound}.
    \end{proof}

    \begin{corollary}\label{cor:probability-invertible}
        Let $s$ be a positive integer and let $A \in \mathbb{F}_q^{s \times s}$ be a random matrix taken with uniform probability.
        Then the probability that $A$ is invertible is greater than $1 - \tau(q^{-1})$.
    \end{corollary}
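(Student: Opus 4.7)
The plan is to derive this directly from Lemma~\ref{lem:probability-full-rank} by specializing it to square matrices. First I would observe that a square matrix $A \in \mathbb{F}_q^{s \times s}$ is invertible if and only if it has full rank $s = \min(s,s)$. Therefore the event ``$A$ is invertible'' coincides exactly with the event ``$\rank(A) = \min(s,t)$'' appearing in Lemma~\ref{lem:probability-full-rank}, applied with $t = s$.

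Next I would substitute $t = s$ into the bound from Lemma~\ref{lem:probability-full-rank}: the probability of full rank is greater than $1 - \tau(q^{-\lvert s - s \rvert - 1}) = 1 - \tau(q^{-1})$, which is precisely the claimed bound. Since both the event and the exponent $\lvert s - t\rvert + 1$ reduce correctly in the square case, no further argument is needed.

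There is no real obstacle here: the entire content lies in recognizing that invertibility is the $s = t$ specialization of the full-rank condition, and in reading off $\lvert s - t \rvert = 0$ from the exponent in Lemma~\ref{lem:probability-full-rank}. Consequently the proof is a single short sentence invoking that lemma.
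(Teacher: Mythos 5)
Your proof is correct and matches the paper's intent exactly: the corollary is stated without proof precisely because it is the $t = s$ specialization of Lemma~\ref{lem:probability-full-rank}, where invertibility coincides with full rank and $\lvert s - t\rvert = 0$ gives the exponent $-1$. Nothing further is needed.
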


    \begin{lemma}\label{lem:random-product}
        Let $A \in \mathbb{F}_q^{s \times s, s}$ be a random matrix with an arbitrary probability distribution, and let $B \in \mathbb{F}_q^{s \times t}$ (respectively $C \in \mathbb{F}_q^{t \times s}$) be a random uniformly distributed matrix independent from $A$.
        Then the matrix $AB$ (respectively $CA$) is uniformly distributed in $\mathbb{F}_q^{s \times t}$ (respectively $\mathbb{F}_q^{t \times s}$).
    \end{lemma}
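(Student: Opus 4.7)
The plan is to reduce the claim to the elementary fact that multiplication by an invertible matrix is a bijection on $\mathbb{F}_q^{s \times t}$, and then average out the arbitrary distribution of $A$ via conditioning.

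Concretely, I would fix an arbitrary target matrix $D \in \mathbb{F}_q^{s \times t}$ and compute $\Pr(AB = D)$ by conditioning on the value of $A$. Since $A \in \mathbb{F}_q^{s \times s, s}$, each realization of $A$ is invertible, so the map $B \mapsto AB$ is a bijection from $\mathbb{F}_q^{s \times t}$ to itself. Therefore, for every invertible matrix $A_0$, the event $\{AB = D\}$ conditioned on $\{A = A_0\}$ coincides with the event $\{B = A_0^{-1} D\}$. Using the independence of $A$ and $B$, together with the uniform distribution of $B$, this conditional probability equals $q^{-st}$, regardless of $A_0$.

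Summing over all invertible $A_0$ via the law of total probability gives
\begin{equation*}
    \Pr(AB = D) = \sum_{A_0 \in \mathbb{F}_q^{s \times s, s}} \Pr(A = A_0) \cdot q^{-st} = q^{-st},
\end{equation*}
which is exactly the probability mass of a uniform distribution on $\mathbb{F}_q^{s \times t}$ evaluated at $D$. Since $D$ was arbitrary, $AB$ is uniformly distributed on $\mathbb{F}_q^{s \times t}$. The case of $CA$ is handled by the entirely analogous argument using the bijection $C \mapsto CA$ on $\mathbb{F}_q^{t \times s}$; alternatively one can deduce it by transposing, noting that $(CA)^\transpose = A^\transpose C^\transpose$, that $A^\transpose$ remains invertible with an arbitrary distribution independent of $C^\transpose$, and that $C^\transpose$ remains uniform on $\mathbb{F}_q^{s \times t}$.

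There is essentially no obstacle here: the only subtlety is to be careful that the invertibility of $A$ is used to guarantee the bijection (without it the conditional probability would collapse), and that independence of $A$ and $B$ is used to pass from the joint distribution to the product of marginals when conditioning.
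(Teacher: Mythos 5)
Your proposal is correct and follows essentially the same route as the paper: condition on the value of $A$, use invertibility to turn the event $\{AB = D\}$ into $\{B = A_0^{-1}D\}$, apply independence and the uniformity of $B$ to get the constant conditional probability $q^{-st}$, and sum via the law of total probability, with the $CA$ case obtained by transposition. No gaps.
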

    \begin{proof}
        It suffices to prove the claim for $B$.
        Then, the claim for $C$ follows by matrix transposition.
        For each $D \in \mathbb{F}_q^{s \times t}$, we have that
        \begin{align*}
            \Pr[AB = D] &= \sum_{A_0 \in \mathbb{F}_q^{s \times s, s}} \Pr[A = A_0] \Pr[B = A_0^{-1} D] \\
            &= \sum_{A_0 \in \mathbb{F}_q^{s \times s, s}} \Pr[A = A_0] \, \frac1{\# \mathbb{F}_q^{s \times t}} = \frac1{\# \mathbb{F}_q^{s \times t}} .
        \end{align*}
        Hence, we get that $AB$ is uniformly distributed in $\mathbb{F}_q^{s \times t}$.
    \end{proof}

    Let $\mathcal{E}$ be the set of $E \in \mathbb{F}_q^{m \times n, r}$ such that $E^{\mathrm{R}} \in \mathbb{F}_q^{m \times r, r}$.

    \begin{lemma}\label{lem:characterization-EL-equal-ER-K}
        Let $E \in \mathbb{F}_q^{m \times n, r}$.
        Then $E \in \mathcal{E}$ if and only if $E^{\mathrm{L}} = E^{\mathrm{R}} K$ for some $K \in \mathbb{F}_q^{r \times (n - r)}$.
        In such a case, we have that $K$ is unique.
    \end{lemma}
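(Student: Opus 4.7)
The plan is a standard linear algebra argument based on comparing column spaces. Let me denote the column space of a matrix $X$ by $\mathrm{Col}(X)$, and observe that $\mathrm{Col}(E)$ is the sum $\mathrm{Col}(E^{\mathrm{L}}) + \mathrm{Col}(E^{\mathrm{R}})$, both viewed as subspaces of $\mathbb{F}_q^m$.

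For the ``if'' direction, assume $E^{\mathrm{L}} = E^{\mathrm{R}} K$ for some $K \in \mathbb{F}_q^{r \times (n-r)}$. Then every column of $E^{\mathrm{L}}$ is a linear combination of columns of $E^{\mathrm{R}}$, so $\mathrm{Col}(E^{\mathrm{L}}) \subseteq \mathrm{Col}(E^{\mathrm{R}})$ and consequently $\mathrm{Col}(E) = \mathrm{Col}(E^{\mathrm{R}})$. Hence $\rank(E^{\mathrm{R}}) = \rank(E) = r$, which together with $E^{\mathrm{R}}$ having exactly $r$ columns gives $E^{\mathrm{R}} \in \mathbb{F}_q^{m \times r, r}$, i.e., $E \in \mathcal{E}$.

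For the ``only if'' direction, assume $E \in \mathcal{E}$, so $E^{\mathrm{R}}$ has $r$ linearly independent columns spanning an $r$-dimensional subspace $V \subseteq \mathbb{F}_q^m$. Since $V \subseteq \mathrm{Col}(E)$ and $\dim \mathrm{Col}(E) = \rank(E) = r = \dim V$, we get $\mathrm{Col}(E) = V$; in particular each column of $E^{\mathrm{L}}$ lies in $V$ and can be written as a linear combination of the columns of $E^{\mathrm{R}}$. Collecting the coefficients column by column yields a matrix $K \in \mathbb{F}_q^{r \times (n-r)}$ with $E^{\mathrm{L}} = E^{\mathrm{R}} K$.

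Uniqueness of $K$ is immediate: if $E^{\mathrm{R}} K = E^{\mathrm{R}} K'$, then $E^{\mathrm{R}}(K - K') = 0$, and the full column rank of $E^{\mathrm{R}}$ (equivalently, the injectivity of the linear map $X \mapsto E^{\mathrm{R}} X$ on $\mathbb{F}_q^{r \times (n-r)}$) forces $K = K'$. There is no real obstacle here; the only thing to be slightly careful about is the dimension-counting step that identifies $\mathrm{Col}(E)$ with $V$, which is what turns ``inclusion of column spaces'' into the equality needed to produce $K$.
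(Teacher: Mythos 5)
Your proof is correct and follows essentially the same column-space comparison as the paper's own argument, just with the dimension-counting step ($\dim \mathrm{Col}(E^{\mathrm{R}}) = r = \dim \mathrm{Col}(E)$ forcing equality of the spaces) made explicit. No issues.
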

    \begin{proof}
        First, suppose that $E \in \mathcal{E}$.
        Then the columns of $E^{\mathrm{R}}$ generate the column-space of $E$.
        Consequently, the columns of $E^{\mathrm{L}}$ are a linear combination of those of $E^{\mathrm{R}}$, that is, $E^{\mathrm{L}} = E^{\mathrm{R}} K$ for some $K \in \mathbb{F}_q^{r \times (n - r)}$.
        Moreover, the matrix $K$ is unique, since the columns of $E^{\mathrm{R}}$ are linearly independent.
        Vice versa, if $E^{\mathrm{L}} = E^{\mathrm{R}} K$ for some $K \in \mathbb{F}_q^{r \times (n - r)}$, then the column-space of $E$ is generated by the columns of $E^{\mathrm{R}}$.
        Since $E$ has rank $r$, it follows that $E^{\mathrm{R}} \in \mathbb{F}_q^{m \times r, r}$, that is $E \in \mathcal{E}$.
    \end{proof}

    \begin{lemma}\label{lem:probability-EL-equal-ER-K}
        Let $E \in \mathbb{F}_q^{m \times n, r}$ be a random matrix taken with uniform distribution.
        Then $E \in \mathcal{E}$ with probability greater that $1 - \tau(q^{-1})$.
        In such a case, the unique matrix $K \in \mathbb{F}_q^{r \times (n - r)}$ such that $E^{\mathrm{L}} = E^{\mathrm{R}} K$ (see Lemma~\ref{lem:characterization-EL-equal-ER-K}) is uniformly distributed in $\mathbb{F}_q^{r \times (n - r)}$.
    \end{lemma}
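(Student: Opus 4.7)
The plan is to reduce everything to a counting argument enabled by Lemma~\ref{lem:characterization-EL-equal-ER-K}, which identifies $\mathcal{E}$ with the product $\mathbb{F}_q^{m \times r, r} \times \mathbb{F}_q^{r \times (n-r)}$ via the bijection $E \mapsto (E^{\mathrm{R}}, K)$, where $K$ is the unique matrix satisfying $E^{\mathrm{L}} = E^{\mathrm{R}} K$. The inverse map sends a pair $(A, K)$ to $(AK \mid A) \in \mathbb{F}_q^{m \times n, r}$, so the bijection is clear. This immediately yields $\#\mathcal{E} = \#\mathbb{F}_q^{m \times r, r} \cdot q^{r(n-r)}$.

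Next I would compute $\Pr[E \in \mathcal{E}] = \#\mathcal{E} \,/\, \#\mathbb{F}_q^{m \times n, r}$ using Lemma~\ref{lem:rank-count}. In the Gaussian-type product, the factors $q^m - q^i$ cancel and, after factoring $q^i$ out of each remaining $q^r - q^i$ and $q^n - q^i$, the exponent $q^{r(n-r)}$ is exactly absorbed. A routine re-indexing then gives
\begin{equation*}
    \Pr[E \in \mathcal{E}] = \prod_{j = 1}^{r} \frac{1 - q^{-j}}{1 - q^{-(n - r + j)}} .
\end{equation*}
Since each denominator lies in $(0, 1]$, the ratio can be bounded below by dropping the denominator, and extending the finite product to an infinite one gives
\begin{equation*}
    \Pr[E \in \mathcal{E}] \geq \prod_{j = 1}^{r} (1 - q^{-j}) \geq \prod_{j = 1}^{\infty} (1 - q^{-j}) > 1 - \tau(q^{-1}) ,
\end{equation*}
where the last inequality is Lemma~\ref{lem:tau-bound} with $s = 1$. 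This establishes the first assertion.

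For the second assertion, I would argue as follows. Because $E$ is uniform on $\mathbb{F}_q^{m \times n, r}$, its conditional distribution given $E \in \mathcal{E}$ is uniform on $\mathcal{E}$. Transporting this uniform measure through the bijection above yields the uniform measure on $\mathbb{F}_q^{m \times r, r} \times \mathbb{F}_q^{r \times (n-r)}$, so $(E^{\mathrm{R}}, K)$ is a pair of independent uniform random matrices on their respective supports. In particular, the marginal distribution of $K$ is uniform on $\mathbb{F}_q^{r \times (n-r)}$, as claimed.

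There is no real obstacle here: the bijection is elementary, and the only mildly delicate step is the algebraic simplification of the ratio of Gaussian binomials and recognising it as a tail of $\prod_j (1 - q^{-j})$, which is exactly what Lemma~\ref{lem:tau-bound} was set up to bound.
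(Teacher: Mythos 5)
Your proposal is correct and follows essentially the same route as the paper: the same bijection $\mathcal{E} \to \mathbb{F}_q^{m \times r, r} \times \mathbb{F}_q^{r \times (n-r)}$ from Lemma~\ref{lem:characterization-EL-equal-ER-K}, the same counting via Lemma~\ref{lem:rank-count}, the same lower bound obtained by dropping the denominators and extending to the infinite product handled by Lemma~\ref{lem:tau-bound}, and the same transport-of-uniform-measure argument for the distribution of $K$. Your product $\prod_{j=1}^{r} (1 - q^{-j})/(1 - q^{-(n-r+j)})$ is just a re-indexing of the paper's $\prod_{i=0}^{r-1} (1 - q^{i-r})/(1 - q^{i-n})$.
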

    \begin{proof}
        By Lemma~\ref{lem:characterization-EL-equal-ER-K}, the map $\Phi$ that sends each $E \in \mathcal{E}$ to $(E^{\mathrm{R}}, K)$, where $K \in \mathbb{F}_q^{r \times (n - r)}$ is the unique matrix such that $E^{\mathrm{L}} = E^{\mathrm{R}} K$, is a bijection
        \begin{equation*}
            \mathcal{E} \to \mathbb{F}_q^{m \times r, r} \times \mathbb{F}_q^{r \times (n - r)} .
        \end{equation*}
        Hence, by Lemma~\ref{lem:rank-count}, the probability that $E \in \mathcal{E}$ is equal to
        \begin{align*}
            \frac{\#\mathbb{F}_q^{m \times r, r} \cdot \#\mathbb{F}_q^{r \times (n - r)}}{\#\mathbb{F}_q^{m \times n, r}}
            &= \left(\prod_{i = 0}^{r - 1} (q^m - q^i) \right) \cdot q^{r(n - r)} \cdot \left(\prod_{i = 0}^{r - 1} \frac{(q^m - q^i)(q^n - q^i)}{q^r - q^i}\right)^{-1} \\
            &\hspace{-7em}= \prod_{i = 0}^{r - 1} \frac{(q^r - q^i) q^{n - r}}{q^n - q^i} = \prod_{i = 0}^{r - 1} \frac{1 - q^{i - r}}{1 - q^{i - n}} > \prod_{i = 0}^{r - 1} (1 - q^{i - r}) \\
            &\hspace{-7em}> \prod_{j=1}^\infty (1 - q^{-j}) > 1 - \tau(q^{-1}) ,
        \end{align*}
        where the last inequality follows from Lemma~\ref{lem:tau-bound}.

        Furthermore, again since $\Phi$ is a bijection, we get that $K$ is uniformly distributed in $\mathbb{F}_q^{r \times (n - r)}$.
    \end{proof}

    The next lemma regards the probability that a MinRank instance can be reduced to canonical form, and the distributions of its canonical form and the corresponding solution.

    \begin{lemma}\label{lem:probability-canonical-form}
        Let $M_0, M_1, \dots, M_k$ be a random MinRank instance and let $\alpha_1, \dots, \alpha_k$ be the corresponding solution.
        Assume that $M_1, \dots, M_k$ and $\alpha_1, \dots, \alpha_k$ are independent and uniformly distributed in $\mathbb{F}_q^{m \times r}$ and $\mathbb{F}_q$, respectively (while $M_0$ depends on $M_1, \dots, M_k$).
        Then $M_0, \dots, M_k$ can be reduced to canonical form with probability greater than $1 - \tau(q^{-1})$.
        In such a case, letting $M_0^{\prime}, \dots, M_k^{\prime}$ be the canonical form of $M_0, \dots, M_k$, and letting $\alpha_1^\prime, \dots, \alpha_k^\prime$ be given by~\eqref{equ:canonical-alpha}, we have that $(M_1^{\prime}, \dots, M_k^\prime)$ and $(\alpha_1^\prime, \dots, \alpha_k^\prime)$ are independent and uniformly distributed in $\mathcal{C}_1$ and $\mathbb{F}_q^k$, respectively.
    \end{lemma}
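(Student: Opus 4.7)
The plan is to pivot on the matrix $L_1 \in \mathbb{F}_q^{k \times k}$. Its entries are precisely the first $k$ entries (in column-major order) of $\langle M_1 \rangle, \dots, \langle M_k \rangle$, hence $k^2$ i.i.d.\ uniform samples from $\mathbb{F}_q$ under the hypothesis. Thus $L_1$ is uniform in $\mathbb{F}_q^{k \times k}$, and Corollary~\ref{cor:probability-invertible} gives that $L_1$ is invertible---equivalently, the instance is reducible to canonical form---with probability exceeding $1 - \tau(q^{-1})$, which is the first claim.

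Next, I would condition on $L_1$ being invertible and handle the two halves of the joint distribution separately. For $(M_1', \dots, M_k')$: the top block of $L'$ is $(I_k \mid L_1^{-1} L_2)$, where $L_2 \in \mathbb{F}_q^{k \times (mn - k)}$ consists of the remaining entries of $\langle M_1 \rangle, \dots, \langle M_k \rangle$. Since $L_1$ and $L_2$ come from disjoint entries of the uniform i.i.d.\ $M_i$, they are independent, and $L_2$ is uniform in $\mathbb{F}_q^{k \times (mn - k)}$. Applying Lemma~\ref{lem:random-product} with $A = L_1^{-1}$ and $B = L_2$ gives $L_1^{-1} L_2$ uniform in $\mathbb{F}_q^{k \times (mn - k)}$, so $(M_1', \dots, M_k')$ is uniform in $\mathcal{C}_1$.

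For $\alpha'$ and the independence, I would use $\alpha' = \alpha L_1 + \ell_1$ from~\eqref{equ:canonical-alpha}. The key observation---and the one that makes use of the parenthetical ``$M_0$ depends on $M_1, \dots, M_k$'' in the hypothesis---is that $M_0$, and hence $\ell_1 = \langle M_0 \rangle_{1..k}$, is independent of $\alpha$. Combined with $\alpha \perp (M_1, \dots, M_k)$, this yields $\alpha \perp (L_1, L_2, \ell_1)$. Conditioning on any triple $(L_1, L_2, \ell_1)$ with $L_1$ invertible, $\alpha$ remains uniform on $\mathbb{F}_q^k$, and the affine bijection $\alpha \mapsto \alpha L_1 + \ell_1$ sends it to $\alpha'$; hence $\alpha'$ is uniform on $\mathbb{F}_q^k$ conditionally on $(L_1, L_2)$, and \emph{a fortiori} conditionally on $(M_1', \dots, M_k')$, which is a deterministic function of $(L_1, L_2)$. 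This simultaneously yields the uniform marginal of $\alpha'$ and its independence from $(M_1', \dots, M_k')$.

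The main obstacle is pinning down the correct reading of the clause ``$M_0$ depends on $M_1, \dots, M_k$''; the distribution of $\alpha'$ is sensitive to whether $M_0$ is allowed to depend on $\alpha$ as well, since otherwise an easy computation shows $\alpha' = \langle E \rangle_{1..k}$, which is not uniform when $E$ itself is drawn from a non-uniform distribution on rank-$r$ matrices. Once the clause is read as ``$M_0$ is, conditionally on $M_1, \dots, M_k$, independent of $\alpha$'', everything else reduces to bookkeeping plus direct appeals to Lemma~\ref{lem:random-product} and Corollary~\ref{cor:probability-invertible}.
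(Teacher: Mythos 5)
Your treatment of the reducibility probability and of $(M_1',\dots,M_k')$ coincides with the paper's: $L_1$ and $L_2$ occupy disjoint entries of the i.i.d.\ uniform $M_1,\dots,M_k$, hence are independent and uniform, Corollary~\ref{cor:probability-invertible} gives the bound $1-\tau(q^{-1})$, and Lemma~\ref{lem:random-product} applied to $L_1^{-1}L_2$ gives the uniformity in $\mathcal{C}_1$. No objection to that part.

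The issue is the $\alpha'$ claim, and you have located it precisely but then resolved the ambiguity in the direction that the lemma cannot support. The reading you adopt --- $\ell_1$ independent of $\alpha$ --- is incompatible with the premise that $\alpha_1,\dots,\alpha_k$ is ``the corresponding solution'': if $M_0$ carried no information about $\alpha$, a uniform $\alpha$ independent of $(M_0,\dots,M_k)$ would solve the instance only with negligible probability, so the hypotheses as you read them are essentially vacuous. The setting in which the lemma is actually invoked (step~1 of the reduction $\mathcal{R}$, i.e.\ \textsf{KeyGen1}) is $M_0 = E - \sum_{i} \alpha_i M_i$ with $E$ uniform in $\mathbb{F}_q^{m\times n,r}$ and independent of the rest; there $\ell_1 = (\langle E\rangle_1,\dots,\langle E\rangle_k) - \alpha L_1$ is fully correlated with $\alpha$, your conditional affine-bijection step does not apply, and, as you yourself observe, \eqref{equ:canonical-alpha} collapses to $\alpha'_j = \langle E\rangle_j$ for $j \le k$. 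At that point the uniformity of $\alpha'$ in $\mathbb{F}_q^k$ would amount to the uniformity of the first $k$ column-major entries of a uniform rank-$r$ matrix, which fails already for the uniform distribution on $\mathbb{F}_q^{m\times n,r}$ (for instance $\Pr[\alpha'=0]$ is of order $q^{-r\lceil k/m\rceil}$, far larger than $q^{-k}$), and $\alpha'$ is then a deterministic function of $E$ rather than independent of it. So your proposal proves a statement whose hypotheses do not match the lemma's intended use; in fairness, the paper's own proof disposes of this point with a bare appeal to Lemma~\ref{lem:random-product}, which handles $\alpha L_1$ but not the addition of the correlated $\ell_1$, so the difficulty you have surfaced sits in the source as much as in your write-up, and it propagates to the factorization of $\Pr[S'=S^*]$ in the proof of Theorem~\ref{thm:main}. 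Closing it requires either a genuinely different hypothesis on the joint law of $(M_0,\alpha)$ or a direct analysis of the law of $(\langle E\rangle_1,\dots,\langle E\rangle_k)$; neither is mere bookkeeping.
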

    \begin{proof}
        With the notation of Section~\ref{sec:canonical}, we have that
        \begin{equation*}
            \begin{pmatrix} \langle M_1 \rangle \\ \vdots \\ \langle M_k \rangle \end{pmatrix} = \begin{pmatrix}\!\begin{array}{c|c} L_1 & L_2 \end{array}\!\end{pmatrix} .
        \end{equation*}
        Hence, it follows that $L_1 \in \mathbb{F}_q^{k \times k}$ and $L_2 \in \mathbb{F}_q^{k \times (mn - k)}$ are independent and uniformly distributed.
        Since $M_0, \dots, M_k$ can be reduced to canonical form exactly when the matrix $L_1$ is invertible, it follows from Corollary~\ref{cor:probability-invertible} that the probability that the reduction is possible is greater than $1 - \tau(q^{-1})$.
        Furthermore, if $L_1$ is invertible, we have that
        \begin{equation*}
            \begin{pmatrix} \langle M_1^\prime \rangle \\ \vdots \\ \langle M_k^\prime \rangle \end{pmatrix} = \begin{pmatrix}\!\begin{array}{c|c} I_k & L_1^{-1} L_2 \end{array}\!\end{pmatrix} ,
        \end{equation*}
        and the claim about the distribution of $(M_1^{\prime}, \dots, M_k^\prime)$ and $(\alpha_1^\prime, \dots, \alpha_k^\prime)$ follows from Lemma~\ref{lem:random-product}.
    \end{proof}

    We conclude with a lemma concerning the invertibility of a certain matrix.

    \begin{lemma}\label{lem:probability-I-X}
        Let $N_1, \dots, N_k \in \mathbb{F}_q^{m \times r}$ and $K \in \mathbb{F}_q^{r \times (n - r)}$ be random matrices that are independent and uniformly distributed in their respective spaces.
        Let $X \in \mathbb{F}_q^{k \times k}$ be the matrix whose entry of the $i$th row and $j$th column is equal to $\langle N_j K \rangle_i$.
        Then
        \begin{equation*}
            \Pr\!\left[I - X \in \mathbb{F}_q^{k \times k, k}\right] > \big(1 - \tau(q^{-1})\big)^2 .
        \end{equation*}
    \end{lemma}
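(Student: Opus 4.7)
The plan is to condition on $K$ and show that, for every value of $K$, the matrix $I_k - X$ is invertible with conditional probability greater than $1 - \tau(q^{-1})$; averaging then yields the claimed bound (and, in fact, a strictly stronger one).

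First, I would fix $K$ and introduce the linear map $\Psi_K : \mathbb{F}_q^{m \times r} \to \mathbb{F}_q^k$ defined by $\Psi_K(N) := (\langle NK \rangle_1, \dots, \langle NK \rangle_k)^\transpose$; by construction the $j$th column of $X$ equals $\Psi_K(N_j)$. Because $N_j$ is uniform in $\mathbb{F}_q^{m \times r}$ and $\Psi_K$ is linear, each column of $X$ is uniformly distributed on the image $V_K := \mathrm{Im}(\Psi_K) \subseteq \mathbb{F}_q^k$, and the columns are independent. Set $d := \dim V_K$. The degenerate case $d = 0$ forces $X = 0$, so $I_k - X = I_k$ is trivially invertible; assume $d \geq 1$ from now on. Choosing any matrix $U \in \mathbb{F}_q^{k \times d}$ whose columns form a basis of $V_K$, I would factor $X = UY$, where $Y \in \mathbb{F}_q^{d \times k}$ collects the coordinate vectors of the columns of $X$ in this basis. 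The bijectivity of $y \mapsto Uy : \mathbb{F}_q^d \to V_K$ transports uniformity of the columns of $X$ on $V_K$ to uniformity of the columns of $Y$ on $\mathbb{F}_q^d$; combined with independence of the $N_j$'s, this gives that $Y$ is uniformly distributed in $\mathbb{F}_q^{d \times k}$.

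The second step is to invoke Sylvester's determinant identity, $\det(I_k - UY) = \det(I_d - YU)$, which reduces the problem to invertibility of $I_d - YU$. I would then show that $YU$ is uniform in $\mathbb{F}_q^{d \times d}$: the $i$th row of $YU$ is $\sum_\ell Y_{i\ell}\, U_{\ell,:}$, a linear combination of the rows of $U$ with coefficients given by a uniform row vector in $\mathbb{F}_q^k$, and since $U$ has rank $d$ its rows span $\mathbb{F}_q^d$, so each row of $YU$ is uniform in $\mathbb{F}_q^d$; independence of the rows of $Y$ yields independence of the rows of $YU$. Consequently $I_d - YU$ is uniform in $\mathbb{F}_q^{d \times d}$, and Corollary~\ref{cor:probability-invertible} gives conditional invertibility probability strictly greater than $1 - \tau(q^{-1})$. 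Averaging over $K$ then gives $\Pr[I - X \in \mathbb{F}_q^{k \times k, k}] > 1 - \tau(q^{-1}) > (1 - \tau(q^{-1}))^2$.

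The main obstacle is establishing the uniform distribution of $Y$ in $\mathbb{F}_q^{d \times k}$ conditional on $K$, since both the subspace $V_K$ and its basis $U$ depend on the random matrix $K$ and the dimension $d$ is itself random; one must check that uniformity holds pointwise in $K$ and is insensitive to the choice of basis (a different basis gives $U' = UA$ for some invertible $A$ and thereby $Y' = A^{-1} Y$, which is still uniform by Lemma~\ref{lem:random-product}). It is worth noting that this plan deliberately bypasses the more naive strategy of requiring $\Psi_K$ to be surjective (which would make $X$ itself uniform in $\mathbb{F}_q^{k \times k}$): surjectivity would force $k \leq mr$, a condition that fails for several of the parameter sets listed in Table~\ref{tab:comparison}.
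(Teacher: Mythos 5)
Your proof is correct, and it takes a genuinely different route from the paper's. The paper first conditions on the leading $r \times r$ block $K_1$ of $K$ being invertible (this is where one factor of $1 - \tau(q^{-1})$ is spent, hence the square in the bound), replaces $N_j$ by $N_j K_1$ so that the relevant entries become uniform, and then splits into two cases: if $mr \geq k$ the matrix $X$ is itself uniform in $\mathbb{F}_q^{k \times k}$, while if $mr < k$ it writes $X = (I_{mr} \mid H)^\transpose J$ with $J$ uniform in $\mathbb{F}_q^{mr \times k}$ and uses a conjugation by an explicit involution $P$ to reduce invertibility of $I - X$ to that of $I_{mr} - J_1^\prime$. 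Your argument conditions on $K$ exactly, works with the image $V_K$ of the linear map $N \mapsto (\langle NK \rangle_1, \dots, \langle NK \rangle_k)^\transpose$, and uses the identity $\det(I_k - UY) = \det(I_d - YU)$ (valid over any commutative ring, e.g.\ by the two block-triangular factorizations of $\bigl(\begin{smallmatrix} I & U \\ Y & I \end{smallmatrix}\bigr)$) to reduce everything to a uniform $d \times d$ matrix; this handles both of the paper's cases at once and, more importantly, avoids paying the $\rho(r)$ conditioning cost, so you actually obtain the strictly stronger bound $\Pr[I - X \in \mathbb{F}_q^{k \times k, k}] > 1 - \tau(q^{-1})$, of which the stated $\big(1 - \tau(q^{-1})\big)^2$ is an immediate consequence. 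All the intermediate claims check out: pushforward of a uniform distribution under a linear map is uniform on the image, the columns of $Y$ inherit independence from the $N_j$, and $Y \mapsto YU$ is surjective because $U$ has full column rank, so $YU$ and hence $I_d - YU$ is uniform and Corollary~\ref{cor:probability-invertible} applies (with the $d = 0$ case handled trivially). Your closing observation is also accurate: insisting on surjectivity of $\Psi_K$ would require $k \leq mr$, which fails for several parameter sets in Table~\ref{tab:comparison}, and both your image-dimension device and the paper's case split exist precisely to get around this.
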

    \begin{proof}
        Let $\rho(s)$ be the probability that a uniformly distributed random matrix in $\mathbb{F}_q^{s \times s}$ is invertible.
        Write $K = (K_1 \mid K_2)$, where $K_1 \in \mathbb{F}_q^{r \times r}$ and $K_2 \in \mathbb{F}_q^{r \times (n - 2r)}$.
        Note that
        \begin{align}\label{equ:conditional-probability}
            \Pr&\!\left[I - X \in \mathbb{F}_q^{k \times k, k}\right]
            \geq \Pr\!\left[I - X \in \mathbb{F}_q^{k \times k, k} \text{ and } K_1 \in \mathbb{F}_q^{r \times r, r}\right] \nonumber\\
            &= \Pr\!\left[I - X \in \mathbb{F}_q^{k \times k, k} \mid K_1 \in \mathbb{F}_q^{r \times r, r}\right] \, \Pr\!\left[K_1 \in \mathbb{F}_q^{r \times r, r}\right] \nonumber\\
            &= \Pr\!\left[I - X \in \mathbb{F}_q^{k \times k, k} \mid K_1 \in \mathbb{F}_q^{r \times r, r}\right] \, \rho(r) .
        \end{align}
        Therefore, it suffices to prove that the conditional probability in~\eqref{equ:conditional-probability} is equal to $\rho\big(\!\min(mr, k)\big)$, and then the claim follows from Corollary~\ref{cor:probability-invertible}.

        Hereafter, assume that $K_1$ is invertible.
        Let $N_j^\prime := N_j K_1$ for each $j \in \{1, \dots, k\}$.
        By Lemma~\ref{lem:random-product}, we have that $N_1^\prime, \dots, N_k^\prime$ are independent and uniformly distributed in $\mathbb{F}_q^{m \times r}$.
        Moreover, we have that $N_j K = (N_j^\prime \mid N_j^\prime K_1^{-1} K_2)$ for each $j \in \{1, \dots, k\}$.
        Consequently, we get that $\langle N_j K \rangle_i = \langle N_j^\prime \rangle_i$ for all positive integers $i \leq \min(mr, k)$.

        If $mr \geq k$, then it follows that $\langle N_j K \rangle = \langle N_j^\prime \rangle$ for each $i \in \{1, \dots, k\}$.
        Hence, $X$ is uniformly distributed in $\mathbb{F}_q^{k \times k}$.
        Thus the conditional probability in~\eqref{equ:conditional-probability} is equal to $\rho(k)$, as desired.

        Assume that $mr < k$.
        It follows easily that there exists a matrix $H \in \mathbb{F}_q^{mr \times (k - mr)}$, which is completely determined by $K$, such that $X = (I_{mr} \mid H)^\transpose \, J$, where
        \begin{equation*}
            J := \begin{pmatrix} \langle N_1^\prime \rangle^\transpose & \cdots & \langle N_k^\prime \rangle^\transpose \end{pmatrix}
        \end{equation*}
        is uniformly distributed in $\mathbb{F}_q^{mr \times k}$.

        Note that the matrix
        \begin{equation*}
            P := \begin{pmatrix}\begin{array}{c|c}
                    I_{mr} & 0 \\[5pt] \hline \\[-8pt]
                    H^\transpose & -I_{k - mr}
            \end{array}\end{pmatrix}
        \end{equation*}
        satisfies $P (I_{mr} \mid H)^\transpose = (I_{mr} \mid 0)^\transpose$ and $P^2 = I$.
        In particular, $P$ is invertible.
        Hence, by Lemma~\ref{lem:random-product}, we have that $J^\prime := J P$ is uniformly distributed in $\mathbb{F}_q^{mr \times k}$.
        Write $J^\prime = (J_1^\prime \mid J_2^\prime)$, where $J_1^\prime \in \mathbb{F}_q^{mr \times mr}$ and $J_2^\prime \in \mathbb{F}_q^{mr \times (k - mr)}$ are independent and uniformly distributed.
        Then we have that
        \begin{align*}
            P(I - X)P &= P^2 - P X P = I - P (I_{mr} \mid H)^\transpose \, J P \\
            &= I - (I_{mr} \mid 0)^\transpose \, J^\prime
            = \begin{pmatrix}\begin{array}{c|c}
                    I_{mr} - J_1^\prime \, & \, -J_2^\prime \\[2pt] \hline \\[-8pt]
                    0 \, & \, I_{k - mr}
            \end{array}\end{pmatrix} .
        \end{align*}
        Consequently, we get that $I - X$ is invertible if and only if $I - J_1^\prime$ is invertible.
        Therefore, the conditional probability in~\eqref{equ:conditional-probability} is equal to $\rho(mr)$, as desired.
    \end{proof}

    \subsection{Proof of Theorem~\ref{thm:main}}

    Our strategy to prove Theorem~\ref{thm:main} is the following.
    First, we provide an algorithm $\mathcal{R}$ that, starting from a random instance of MinRank $\mathcal{M}$ generated by \textsf{KeyGen1}, returns the canonical form $\mathcal{M}^\prime$ of $\mathcal{M}$ with probability greater than $\big(1 - \tau(q^{-1})\big)^4$.
    Then, we show that $\mathcal{M}^\prime$ follows the same probability distribution of a random MinRank instance generated by \textsf{KeyGen3}.
    Therefore, since the attacker $\mathcal{A}$ can solve $\mathcal{M}^\prime$ (and thus $\mathcal{M}$) with probability $p_3$, we get that the attacker $\mathcal{A}$ can solve $\mathcal{M}$ with probability $p_1 > \big(1 - \tau(q^{-1})\big)^4 p_3$.
    Consequently, it follows that $p_3 < \big(1 - \tau(q^{-1})\big)^{-4} p_1$, as desired.

    The steps of the algorithm $\mathcal{R}$ are the following.

    \begin{enumerate}
        \vspace{0.5em}
        \setlength{\itemsep}{0.5em}

        \item\label{ite:R1} Generate $M_0, \dots, M_k \in \mathbb{F}_q^{m \times n}$, $E \in \mathbb{F}_q^{m \times n, r}$, and $\alpha_1, \dots, \alpha_k \in \mathbb{F}_q$ as they are generated by \textsf{KeyGen1}.
        In particular, we have that $M_1, \dots, M_k$, $E$, and $\alpha_1, \dots, \alpha_k$ are independent and uniformly distributed in their respective spaces.

        \item If $E^{\mathrm{R}} \notin \mathbb{F}_q^{m \times r, r}$ then stop.
        Otherwise, if $E^{\mathrm{R}} \in \mathbb{F}_q^{m \times r, r}$, then, in light of Lemma~\ref{lem:characterization-EL-equal-ER-K}, let $K \in \mathbb{F}_q^{r \times (n - r)}$ be the unique matrix such that $E^{\mathrm{L}} = E^{\mathrm{R}} K$.
        Note that, by Lemma~\ref{lem:probability-EL-equal-ER-K},
        the second case happens with probability greater than $1 - \tau(q^{-1})$,
        and $K$ is uniformly distributed in $\mathbb{F}_q^{r \times (n - r)}$.

        \item\label{ite:R3} If $M_0, \dots, M_k$ cannot be reduced to canonical form then stop.
        Otherwise, if $M_0, \dots, M_k$ can be reduced to canonical form, let $M_0^\prime, \dots, M_k^\prime$ be the canonical form of $M_0, \dots, M_k$, and let $\alpha_1^\prime, \dots, \alpha_k^\prime$ be given by~\eqref{equ:canonical-alpha}.
        Note that, by Lemma~\ref{lem:probability-canonical-form}, the second case happens with probability greater than $1 - \tau(q^{-1})$, while
        $(M_1^\prime, \dots, M_k^\prime)$ and $(\alpha_1^\prime, \dots, \alpha_k^\prime)$ are independent and uniformly distributed in $\mathcal{C}_1$ and $\mathbb{F}_q^k$, respectively.
        Furthermore, since $k < m(n - r)$, we have that $M_1^{\prime\mathrm{R}}, \dots, M_k^{\prime\mathrm{R}}$ are independent and uniformly distributed in $\mathbb{F}_q^{m \times r}$.

        \item Let $X \in \mathbb{F}_q^{k \times k}$ be the matrix having the entry of the $i$th row and $j$th column equal to $\langle M_j^{\prime\mathrm{R}} K \rangle_i$.
        If $I - X$ is not invertible then stop.
        Otherwise, return $M_0^\prime, \dots, M_k^\prime$.
        Note that, by Lemma~\ref{lem:probability-I-X}, the second case happens with probability greater than $\big(1 - \tau(q^{-1})\big)^2$.

        \vspace{0.5em}
    \end{enumerate}

    By construction, we have that $\mathcal{R}$ returns the canonical form $M_0^\prime, \dots, M_k^\prime$ with probability greater than $\big(1 - \tau(q^{-1})\big)^4$.
    It remains to prove that such canonical form follows the same probability distribution of a MinRank instance generated by \textsf{KeyGen3}.

    Let $\mathcal{S}$ be the set of
    \begin{equation*}
        (M_0^*, \dots, M_k^*, E^*, \alpha_1^*, \dots, \alpha_k^*) \in \mathcal{C}_0 \times \mathcal{C}_1 \times \mathcal{E} \times \mathbb{F}_q^k
    \end{equation*}
    such that
    \begin{enumerate}[(i)]
        \vspace{0.5em}
        \setlength{\itemsep}{0.5em}

        \item\label{ite:c1} $E^* = M_0^* + \sum_{i = 1}^k \alpha_i^* M_i^*$;

        \item\label{ite:c2} $\alpha_1^*, \dots, \alpha_k^*$ is the unique solution to the linear system
        \begin{equation*}
            \sum_{j = 1}^k \left(\delta_{i,j} - \langle M_j^{*\mathrm{R}} K\rangle_i \right) x_i = \langle M_0^{*\mathrm{R}} K^* \rangle_i \quad (i = 1, \dots, k) ,
        \end{equation*}
        where $K^* \in \mathbb{F}_q^{r \times (n - r)}$ is the unique matrix such that $E^{*\mathrm{L}} = E^{*\mathrm{R}} K^*$, by Lemma~\ref{lem:characterization-EL-equal-ER-K}.

        \vspace{0.5em}
    \end{enumerate}
    Note that each element of $\mathcal{S}$ is completely determined by either
    \begin{enumerate}[(a)]
        \vspace{0.5em}
        \setlength{\itemsep}{0.5em}

        \item\label{ite:n1} $M_1^*, \dots, M_k^*$, $E^*$, and $\alpha_1^*, \dots, \alpha_k^*$, since using~\eqref{ite:c1} one can retrieve $M_0^*$ from the former matrices and scalars; or

        \item\label{ite:n2} $M_0^{*\mathrm{R}}, M_1^*, \dots, M_k^*$, and $K^*$.
        In fact, given such matrices, one can retrieve $\alpha_1^*, \dots, \alpha_k^*$ by using~\eqref{ite:c2}.
        Then, using~\eqref{ite:c1}, one gets that
        \begin{equation*}
            E^{* \mathrm{R}} = M_0^{* \mathrm{R}} + \sum_{i = 1}^k \alpha_i^* M_i^{* \mathrm{R}} .
        \end{equation*}
        Finally, one has that $E^* = \big(E^{*\mathrm{R}} K^* \mid E^{*\mathrm{R}}\big)$.

        \vspace{0.5em}
    \end{enumerate}
    Assume that $\mathcal{R}$ returns $M_0^\prime, \dots, M_k^\prime$ and let all the subsequent probabilities being conditioned to such an event.
    Let $E$ and $\alpha_1^\prime, \dots, \alpha_k^\prime$ be the matrices and the scalars of steps~\ref{ite:R1} and~\ref{ite:R3} of $\mathcal{R}$, respectively.
    By construction, we have that $(M_1^\prime, \dots, M_k^\prime)$, $E$, and $(\alpha_1^\prime, \dots, \alpha_k^\prime)$ are independent and (conditionally) uniformly distributed in $\mathcal{C}_1$, $\mathcal{E}$, and $\mathbb{F}_q^k$, respectively.
    Moreover, again by construction, it follows that
    \begin{equation*}
        S^\prime := (M_0^\prime, \dots, M_k^\prime, E, \alpha_1^\prime, \dots, \alpha_k^\prime) \in \mathcal{S} .
    \end{equation*}
    Pick an arbitrary
    \begin{equation*}
        S^* := (M_0^*, \dots, M_k^*, E^*, \alpha_1^*, \dots, \alpha_k^*) \in \mathcal{S} .
    \end{equation*}
    By~\eqref{ite:n1}, we have that
    \begin{align*}
        \Pr[S^\prime = S^*] &= \Pr\!\left[(M_1^\prime, \dots, M_k^\prime) = (M_1^*, \dots, M_k^*)\right] \\
        &\phantom{mm}\cdot \Pr[E = E^*] \cdot \Pr\!\left[(\alpha_1^\prime, \dots, \alpha_k^\prime) = (\alpha_1^*, \dots, \alpha_k^*)\right] .
    \end{align*}
    Hence, we get that $S^\prime$ is uniformly distributed in $\mathcal{S}$.

    Let $M_0^{\circ}, \dots, M_k^{\circ}$, $K^{\circ}$, $E^{\circ\mathrm{R}}$,
    and $\alpha_1^{\circ}, \dots, \alpha_k^{\circ}$ be the matrices and the scalars generated by \textsf{KeyGen3}.
    Also, put $E^{\circ} := \big(E^{\circ\mathrm{R}} K^{\circ} \mid E^{\circ\mathrm{R}}\big)$.
    It follows easily that
    \begin{equation*}
        S^{\circ} := (M_0^{\circ}, \dots, M_k^{\circ}, E^{\circ}, \alpha_1^{\circ}, \dots, \alpha_k^{\circ}) \in \mathcal{S} .
    \end{equation*}
    Moreover, it follows easily that $M_0^{\circ\mathrm{R}}$, $(M_1^{\circ}, \dots, M_k^{\circ})$, and $K$ are independent and uniformly distributed in $\mathbb{F}_q^{m \times r, r}$, $\mathcal{C}_1$, and $\mathbb{F}_q^{r \times (n - r)}$, respectively.
    Therefore, by~\eqref{ite:n2}, we have that
    \begin{align*}
        \Pr[S^{\circ} = S^*] &= \Pr[M_0^{\circ \mathrm{R}} = M_0^{* \mathrm{R}}] \cdot \Pr\!\left[(M_1^\circ, \dots, M_k^\circ) = (M_1^*, \dots, M_k^*)\right] \\
        &\phantom{mm} \cdot \Pr[K^\circ = K^*] .
    \end{align*}
    Hence, we get that $S^\circ$ is uniformly distributed in $\mathcal{S}$.

    The proof is complete.

    \section*{Statements and Declarations}

    \subsection*{Competing Interests}
    The authors declare that they have no known competing financial interests or personal relationships that could have appeared to influence the work reported.

    \subsection*{Data availability statement}
    No new data were created or analysed in this study.
    Data sharing is not applicable to this article.


\end{document}